\newtheorem{theorem}{Theorem}[section]
\theoremstyle{definition}
\newtheorem{definition}[theorem]{Definition}
\theoremstyle{remark}
\newtheorem{remark}[theorem]{Remark}
\theoremstyle{property}
\newtheorem{property}[theorem]{Property}
\numberwithin{equation}{section}
\def\be{\begin{equation}}
\def\ee{\end{equation}}
\def\bea{\begin{eqnarray}}
\def\eea{\end{eqnarray}}
\def\ba{\begin{array}{rcl}}
\def\ea{\end{array}}
\font\sevensy=cmsy7
\newbox\novebox
\def\vint{\copy\novebox}
\begin{document}

\title{Abelian link invariants and homology}

\author{Enore Guadagnini}
\address{Dipartimento di Fisica dell'Universit\`a di Pisa e INFN Sezione di Pisa}
\curraddr{Dipartimento di Fisica ``E. Fermi",
Largo B. Pontecorvo 3, 56127, Pisa, Italy}
\email{guada@df.unipi.it}
\author{Francesco Mancarella}
\address{International School for Advanced Studies SISSA, Italy}
\curraddr{SISSA, Via Beirut 2-4, 34151 Trieste, Italy}
\email{mancarel@sissa.it}

\subjclass{2010 Classification, Primary 57R56; Secondary 57M27}
\keywords{Topological quantum field theory, homology groups}

\begin{abstract}
We consider the link invariants defined by the quantum Chern-Simons field theory with compact  gauge group $U(1)$ in a closed oriented 3-manifold $M$. The relation of the abelian link  invariants with  the homology group of the complement of the links is discussed.  We prove that, when $M$ is a homology sphere or when  a link ---in a generic manifold $M$--- is homologically trivial, the associated observables coincide with the observables of the sphere $S^3$.  Finally we show that the $U(1)$ Reshetikhin-Turaev surgery invariant of the manifold  $M$  is not a function of  the homology group only, nor a function of the homotopy type of $M$ alone. \end{abstract}

\maketitle

\section{Introduction}
Quantum field theories can be used not only to describe the physics of elementary particles but also to compute topological invariants. In this article we consider the link invariants that are defined by the abelian $U(1)$ Chern-Simons field theory formulated in a closed and oriented 3-manifold $M$, and we show  how  these invariants are related with the homology group of the complement of the links. To this end, we shall introduce a few definitions ---like that of simplicial satellite or of equivalent knot--- which are used to connect the values of the  $U(1)$-charges which are associated with the components of the links with the numbers that classify the homology classes of loops. 

We  demonstrate that the set of the abelian link invariants (or  observables) in any homology sphere coincides with the set of observables in the sphere $S^3$. We also prove that if a link in a generic manifold $M$ is  homologically trivial then its associated observable coincides with an observable computed in  $S^3$. We then consider the $U(1)$ surgery  invariant  of Reshetikhin-Turaev, we show that this invariant: (1) is trivial for homology spheres, (2) is not a function of the homology group of the manifold only  and (3) is not a function of the homotopy type of the manifold only.   

In order to make this article self-contained, we have added a preliminary section containing a description of the new developments in the field theory computations of the link invariants together with a brief description of the Reshetikhin-Turaev surgery rules.

\section{Field theory approach}

The abelian Chern-Simons theory \cite{ASS, ASS2, CRH,P}  is a gauge theory defined in terms of a $U(1)$-connection $A$ in a closed oriented 3-manifold $M$. For each oriented knot $C\subset M$, the corresponding holonomy   is given by the integral $\int_C A\, $ which is invariant under  $U(1)$ gauge transformations acting on $A$. 

In the standard field theory formulation of abelian gauge theories, the (classical fields) configuration space locally coincides with the set of 1-forms modulo exact forms, $A \sim A + d \Lambda$. However, if one assumes \cite{W1, FR} that a complete set of observables is given by the exponential of the holonomies $\{ \exp [  2 \pi i \int_C A ] \}$ which are associated with oriented knots  $C $  in $M$, the invariance group of the observables is actually  larger than the standard gauge group.   In facts,  the observables must be locally defined on the classes of 1-forms modulo forms $\widehat A$ with integer periods, $A\sim A+ \widehat A  \; ,  \; \int_C {\widehat A} = n\in {\mathbb Z}$. This means that the configuration space is defined in terms of the Deligne-Beilinson cohomology classes \cite{D, B, FR}. So, we shall now consider the Deligne-Beilinson (DB) formulation of the abelian $U(1)$ Chern-Simons gauge theory. 

In order to simplify the notation,  the classes belonging to the DB cohomology group of $M$ of degree~1,  $H^1_D(M) $, will be denoted by $A$. Let $H^3_D(M) $ be the space of the DB classes of degree~3. 
The pairing of the DB cohomology groups, which is called the *-product, defines a natural mapping \cite{PR}
\be
H_D^1(M) \otimes H_D^1(M) \longrightarrow H^3_D(M) \; . 
\label{1}
\ee
The *-product of $A$ with $A$ just corresponds to the abelian Chern-Simons lagrangian \cite{FR,EF} 
\be 
A *A \longrightarrow A\, \wedge d A \; . 
\label{2}
\ee
Precisely like the integral of any element of $H^3_D(M)$, the Chern-Simons action 
\be 
S= \int_M A* A  \longrightarrow \int_M A\wedge d A 
\label{3}
\ee
is defined modulo integers; consequently, the path-integral phase factor
\be
\exp \bigl \{ 2 \pi i k S \bigr \} = \exp \left \{ 2 \pi i k  \int_M A * A \right \}
\label{4}
\ee
is well defined when the coupling constant $k$  takes integer values 
\be 
k \in {\mathbb Z} \quad , \quad k \not= 0 \; . 
\label{5}
\ee
A modification of the orientation of $M$ is equivalent to the replacement $k \rightarrow - k$.  Let us  consider a framed, oriented and coloured link $L\subset M$ with $N$ components $\{ C_1 , C_2 , ... , C_N \} $.  The colour of each component $C_j$, with $j=1,2,...,N$, is represented by an integer charge $q_j \in {\mathbb Z}$.   The classical expression $W(L)$ of the Wilson line is given by 
\be
W(L) =  \prod_{j=1}^N \exp \Bigl \{ 2 i \pi q_j \int_{C_j} A  \Bigr \}
= \exp \Bigl \{ 2 i \pi \sum_j q_j \int_{C_j}  A \Bigr \} \; .
\label{6}
\ee
Each link component which has colour $q=0$ can be eliminated, and a modification of the orientation of a link component $C$ is equivalent to a change of the sign of the 
associated charge $q$.  The observables of the Chern-Simons gauge theory in $M$ are given by the expectation values 
\begin{equation} 
\langle W(L) \rangle \Bigr |_{M} = \frac{\int_M DA \, e^{2 \pi i k S } \, W(L) }{\int_M DA \, e^{2 \pi i k S }} \; , 
\label{7}
\ee
where the path integral  should be defined on the DB classes  which belong to $H^1_D(M) $.  More precisely, the structure of the functional space admits a natural description in terms of the homology groups of $M$, as indicated by the following exact sequence \cite{S1, S2}
\be
0\buildrel \over \longrightarrow {\Omega ^1\left( {M } \right)}
\mathord{\left/ {\vphantom {{\Omega ^1\left( {M } \right)} {\Omega _{\mathbb Z}^1 \left( {M} \right)}}} \right. \kern-\nulldelimiterspace} {\Omega
_{\mathbb Z}^1 \left( {M } \right)}\buildrel \over \longrightarrow H_D^1
\left( {M} \right)\buildrel \over \longrightarrow H^2\left( {M} \right)\buildrel \over \longrightarrow 0 \; , 
\label{8}
\ee
where $\Omega^1( M )$ is the space of  $1$-forms on $M$, $\Omega _{\mathbb Z}^1 ( M )$ is the space of closed $1$-forms with integer periods on $M$ and $H^{p}( M )$ is the $( p)^{th}$ integral cohomology group of $M$. Thus, $H_D^1( M)$ can be understood as an affine bundle over $H^2(M)$, whose fibres have a typical underlying (infinite dimensional) vector space structure given by $\Omega^1(M) / \Omega^1_{\mathbb Z}(M)$. 

The framing of the link components is used to fix the ambiguities, which appear in the computation of the expectation values (\ref{7})  of the composite Wilson line operators, in such a way to maintain the ambient isotopy invariance of the expectation values \cite{AM,L, EF}. 

Assuming that expression (\ref{7}) is well defined, one can prove \cite{EF} the most important properties of the expectation values:  (i) the colour periodicity, (ii) the ambient isopoty invariance and (iii) the validity of the satellite relations. We shall briefly discuss these subjects in section~2.3. When expression (\ref{7}) is well defined, the computation of the observables  provides the solution of the Chern-Simons field theory in the manifold $M$. 

\subsection{Fundamental link invariants}

When the 3-manifold $M$ coincides with the 3-sphere $S^3$, one can compute the expectation values (\ref{7})  by means of (at least) two methods: standard perturbation theory or a nonperturbative path integral computation. Both methods give the same answer. 

\subsection*{First method.} Since the topological properties of links in ${\mathbb R}^3$ and in $S^3$ coincide, let us consider the abelian Chern-Simons theory formulated in  ${\mathbb R}^3$. In this case, the Deligne-Beilinson approach is equal to the standard perturbative formulation of the abelian gauge theories. The direct computation of the observables (\ref{7}) by means of standard perturbation theory \cite{L} gives
\be
\langle W(L) \rangle \Bigr |_{S^3} = \langle W(L) \rangle \Bigr |_{{\mathbb R}^3}   =   \exp \Bigl \{ -( 2 i \pi / 4k)  \sum_{ij} q_i  {\mathbb L}_{ij} q_j  \Bigr  \} \,  ,
\label{9}
\ee
where the off-diagonal elements of the linking matrix $ {\mathbb L}_{ij}$, which is associated with the link $L$, are given by the  linking numbers between the different link components 
\be
 {\mathbb L}_{ij} =   {\ell} k ( C_i , C_j )  =   {\ell} k ( C_j , C_i )\quad , \quad {\rm for~}  i \not= j ; 
\label{10}
\ee
whereas the diagonal elements of  the matrix $ {\mathbb L}_{ij}$  correspond to the linking numbers of the link components $\{ C_j \}$  with their framings $\{ C_{j{\rm f}} \}$
 \be
  {\mathbb L}_{jj} = {\ell} k ( C_j , C_{j{\rm f}} ) = {\ell} k ( C_{j{\rm f}}, C_j )\; .  
   \label{11}
\ee

\subsection*{Second method.} 
Sequence (\ref{8}) implies  that $H_D^1( S^3) \simeq \Omega^1(S^3) / \Omega^1_{\mathbb Z}(S^3)$ because $H^2 (S^3)$ is trivial. By using the property of translation invariance of the functional measure,  which can also be expressed in the form of a Cameron-Martin like formula \cite{CM}, one can introduce \cite{EF} a change of variables in the numerator of (\ref{7})  in such a way to factorize out the  value of the partition function, which cancels with the denominator.   
As a result, one can produce an explicit nonperturbative path-integral computation of the observables (\ref{7}) and one finds  \cite{EF}
\be
\langle W(L) \rangle \Bigr |_{S^3}   =   \exp \Bigl \{ -( 2 i \pi / 4k)  \sum_{ij} q_i  {\mathbb L}_{ij} q_j  \Bigr  \} \,  , 
\label{12}
\ee
which coincides with expression (\ref{9}). 

The observables (\ref{12}), which are  ambient isotopy invariants,  are called the abelian link invariants. They represent the fundamental invariants because, as we shall see,  the value of any other topological invariant of the abelian Chern-Simons theory in a generic 3-manifold $M$ can be derived from expression (\ref{12}).   

\subsection{Observables computation}

When the Chern-Simons field theory is defined in a nontrivial manifold $M$, the explicit computation of the observables by means of the standard field theory formulation of gauge theories presents some technical difficulties, which are related to the gauge-fixing procedure and the definition of the fields propagator. For example, when $M= S^1 \times S^2$, the  Feynman propagator for the $A$ field does not exist because of the presence of a physical zero mode; in facts, among the field configurations, a globally defined  1-form $A_0$ exists such that $d A_0 = 0 $ but $A_0$ is not the gauge transformed of something else.  One can presumably overcome  these technical difficulties, and one can imagine of computing the observables by means of perturbation theory.  But, as a matter of facts, an explicit path-integral computation of the link observables (\ref{7})  by means of the standard gauge theory perturbative methods has never been produced when the 3-manifold is not equal to $ {\mathbb R}^3$. 

For a nontrivial 3-manifold $M$, the expectation values $\langle W(L) \rangle \bigr |_{M}$   can be   really computed ---for certain manifolds--- by using two methods: (i) a nonperturbative path-integral formalism based on the Deligne-Beilinson cohomology, (ii) the operator surgery method. In all the cases considered so far, these two methods give exactly the same answer. 

\subsection*{Nonperturbative path-integral computation.}
Let us consider a class  of torsion-free manifolds of the type $S^1 \times \Sigma$, where $\Sigma $ denotes the 2-sphere $S^2$ or a closed Riemann surface of  genus $g \geq 1$. In this case, the first homology group $H_1(M)$ is not trivial and is given by the product of free abelian group factors;  standard perturbation theory cannot be used since  the Feynman   propagator for the $A$ field does not exist in $S^1 \times \Sigma$. But one can use the nonperturbative method developed in \cite{EF}, in which the introduction of a gauge fixing and of the Feynman propagator is not necessary. 
The structure of the bundle $H_D^1( M)$, which is determined by the sequence (\ref{8}),  and of the resulting path-integral have been described in \cite{EF}. One finds: 

\begin{enumerate}

\item  when $L$ is not homologically trivial (mod $2k$) in $S^1 \times \Sigma$,  
\be
 \langle W(L) \rangle \Bigr |_{S^1\times \Sigma}   =  0  \; ;  
 \label{13}
 \ee
 
\item when $L$ is homologically trivial (mod $2k$)  in $S^1 \times \Sigma$, 
\be
\langle W(L) \rangle \Bigr |_{S^1 \times \Sigma} =   \exp \Bigl \{ -( 2 i \pi / 4k)  \sum_{ij} q_i  {\mathbb L}_{ij} q_j  \Bigr  \} \,  , 
\label{14}
\ee 

\end{enumerate}

\noindent which formally coincides with expression (\ref{12}). Note that, when $L$ is homologically trivial (mod $2k$), expression (\ref{14}) is well defined \cite{EF}. 

By using nonperturbative path-integral arguments, the results  shown in equations (\ref{13}) and (\ref{14}) have been generalized  by Thuillier \cite{F} to the case in which the 3-manifold is $M= RP^3$. This example is interesting because $H_1 ( RP^3) $ is not freely generated (in fact, $H_1 (RP^3) = {\mathbb Z}_2$) and then $RP^3$ has nontrivial torsion. 

 \subsection* {Operator surgery method.} By means of the quantum groups modular algebra, one can construct link invariants of ambient isotopy; in order to compute these invariants in a nontrivial manifold $M$,  Reshetikhin and Turaev  have introduced appropriate surgery rules  \cite{RT}. These rules  ---that  have been also developed by Kohno \cite {K}, by Lickorish \cite{LIC} and by Morton and Strickland \cite{MOST} in the mathematical setting---  have been adapted to the physical context in \cite{W1, OE, OEP, L}. We shall now recall the main features of the operator surgery method, which can be used to compute  the abelian link invariants in a generic manifold $M$.

Every closed orientable connected 3-manifold $M$ can be obtained by Dehn  surgery on $S^3$ and admits a surgery presentation  \cite{ROL} which is described by a framed surgery link ${\mathcal L}\subset S^3$. A  so-called surgery coefficient $a_i$  is associated with each component ${\mathcal L}_i $ of ${\mathcal L}\, $; when $a_i$ is an integer, we will put $a_i = \ell k ({\mathcal L}_i , {\mathcal L}_{i {\rm f}} )$.   For each manifold $M$, the corresponding surgery link ${\mathcal L}$ is not unique; all the possible surgery links which describe ---up to orientation-preserving homeomorphisms--- the same manifold are related by Kirby moves \cite{ROL}.    Any oriented coloured framed link $L \subset M$ can be described by a link $L^{\, \prime } = L \cup {\mathcal L} $ in $S^3$ in which:

\begin{itemize}

\item the surgery link  ${\mathcal L}$ describes the surgery instruction corresponding to a presentation of $M$ in terms of Dehn surgery on $S^3$;

\item the  link $L$, which belongs to the complement of ${\mathcal L}$ in $S^3$, describes how $L$ is actually placed in $M$.

\end{itemize}

\noindent According to the rules  \cite{L} of the operator surgery method,  the expectation value of the Wilson line operator $W(L)$ in $M$ can be written as a ratio
\be
\langle W ( L) \rangle \Bigr |_{M} =   \langle W ( L) \, W({\mathcal L}) \rangle  \Bigr |_{S^3}  \; {\Big  / } \;   \langle  W({\mathcal L}) \rangle \Bigr |_{S^3} \, ,
\label{15}
\ee
where to each component of the surgery link ${\mathcal L}$ is associated a  particular colour state $\psi_0$. Expression (\ref{12}) implies that, for fixed integer $k$, the colour space  of each link component coincides with  space of residue classes of integers mod~$2k$ (see also section~2.3). Thus the colour space has a canonical ring  ${\mathcal R} $ structure; let $\chi_j$ denote the residue class associated with the integer $j$. Then,  when the gauge group is $U(1)$, the colour state $\psi_0 \in {\mathcal R }$ is given by
\be
\psi_0 = \sum_{j=0}^{2k -1}\,  \chi_j \, .
\label{16}
\ee
This simply means that, in the computation of the observables ({\ref{15}), one must sum over the values $q=0,1,2,..., 2k-1$ of the colours which are associated with the components of the surgery link (see for instance equation (\ref{35})).  
The proof  that the surgery rules (\ref{15}) and (\ref{16}) are well defined and consistent ---when the denominator of expression (\ref{15}) is not vanishing--- is nontrivial and essentially consists in proving that expression  (\ref{15}) is invariant under Kirby moves \cite{KI, RT, L}. 

\begin{remark}
The existence of surgery rules for the computation of the observables in quantum field theory is quite remarkable. Let us summarize the reasons for the existence of surgery rules in the Chern-Simons theory. 
Any 3-manifold $M$ can be obtained \cite{ROL} by removing and gluing back ---after the introduction of appropriate homeomorphisms  on their boundaries--- solid tori embedded in $S^3$.  The crucial point now is that the set of all possible surgeries is generated by two elementary operations which in facts correspond to twist homeomorphisms \cite{ROL}.  The action of  these two twist homeomorphism generators on the observables can be found by analysing expression (\ref{12}).  This means that  the solution of the Chern-Simons field theory in $S^3$ determines \cite{L} the representation of the surgery on the set of observables. As a result, one can then connect the values of the observables in any nontrivial 3-manifold $M$ with the values of the observables in $S^3$. For this reason, the solution of the  topological Chern-Simons  field theory in $S^3$ actually fixes the solution of the same theory in any closed oriented 3-manifold $M$. 
\end{remark}

\subsection{Main properties}

We conclude this section by recalling a few properties of the observables that will be useful for the following discussion.  Since the linking numbers take integer values, expression (\ref{12}) is invariant under the replacement $q_j \rightarrow q_j + 2k$, where $q_j$ denotes the colour of a generic link component.  Thus, for fixed $k$,  the colour space of each link component can be identified with ${\mathbb Z}_{2k}$, which coincides with the space of the residue classes of integers mod~$2k$. This property also holds \cite{EF} for the observables in a generic manifold $M$. 

 At the classical level, one link component $C$ with colour $q > 1$ can be interpreted as the $q$-fold covering of $C$.  At the quantum level, one needs to specify this correspondence a bit more precisely because of possible ambiguities in the computation of the expectation values of the composite Wilson lines operators. As we have already mentioned, all these ambiguities are removed by means of the framing procedure. 

\subsection* {Satellites.}  A general discussion of the satellite properties  of the observables of the Chern-Simons theory can be found in Ref.\cite{W1, RT, LIC,  MOST, L}. Here we shall concentrate on the aspects which are relevant for the following exposition.  
 
 Let $C_{\rm f}$ be the framing of the oriented link component $C\subset M$ which has colour  $q $ with  $|  q | > 1 $. One can imagine that $C$ and $C_{\rm f}$ define the boundary of a band ${\mathcal B}\subset M$;  then, one can \cite{L, EF} simply replace $C$  with $|q|$ parallel components $\{ \widetilde C_1,..., \widetilde C_{|q|}\} $ on ${\mathcal B}$ where each component has colour $q^\prime = 1$ (in order to agree with the sign of $q$, one possibly needs to modify the orientations of the link components). The framings 
$\{ \widetilde C_{1{\rm f}},..., \widetilde C_{{|q|}{\rm f}} \} $ of the components $\{ \widetilde C_1,..., \widetilde C_{|q|}\} $ also belong to the band ${\mathcal B}$. One can easily verify that the observables (\ref{12}) are invariant under this substitution. To sum up, as far as the abelian link invariants are concerned, each link component $C$ with colour $| q| > 1 $ can always be interpreted as (and can be substituted with) the union of $|q|$ parallel copies of $C$ with unitary colours. 

\begin{definition} For any coloured, oriented and framed link $L\subset M$, one can  introduce a new link $\widetilde L\subset M$ which is a satellite of $L$ and which is obtained from $L$ by replacing each link component of colour $q$ with $|q|$ parallel copies of the same component, each copy with unitary colour. We call $\widetilde L $ the {\it simplicial satellite} of $L$.  
\end{definition}

The observables associated with any link $L$ and the observables associated with its simplicial satellite $\widetilde L$ are totally equivalent.  In other words,   the observables of the abelian Chern-Simons theory in a generic manifold $M$ satisfy \cite{EF} the relation  
 \be
 \langle W(L) \rangle \Bigr |_{M} = \langle W(\widetilde L) \rangle \Bigr |_{M}  \; , \; \forall L\subset M  \; . 
 \label{17}
 \ee 
 The introduction of the simplicial satellites is useful because, in this way,   we can possibly do without the concept of colour space, which has not a topological nature, and  we can interpret the abelian link invariants entirely in terms of homology groups. This issue will be discussed in the next section.

\section{Homology and link complements}

In this section we show that, for any link $L\subset S^3 $ with simplicial satellite ${\widetilde L}$, the abelian link invariant $\langle W( L) \rangle \bigr |_{S^3} $ is completely determined by the homology group  $H_1( S^3 - {\widetilde L} )$ of the complement of the link $\widetilde L$ in $S^3$. We also prove that
\begin{itemize}
\item the sets of the abelian Chern-Simons observables  in each  homology  3-sphere and in $S^3$ coincide;    
\item if the simplicial satellite of a link in a generic 3-manifold is homologically trivial, the associated observable coincides with an observable in  $S^3$.   
\end{itemize}

\subsection {Link complements} Let us  firstly recall  that the homology group $H_1(X)$ of a manifold $X$ can be interpreted as the abelianization of the fundamental  group $\pi_1 (X)$ because, given a presentation of $\pi_1(X)$ in terms of generators $\{  \gamma_1, \gamma_2,...\}$ and a set of relations between them, by adding the new constraints $ [\gamma_a , \gamma_b] =0 $ for all $a$ and $b$, one obtains a presentation of $H_1(X)$. Thus, let $C_1$ be an oriented knot in $S^3$; the homology group  of its complement $X= S^3 - C_1$ is freely generated, $H_1(S^3 - C_1)= {\mathbb Z}$, and one can represent the generator $g_1$ by means of a small oriented circle $C_{g_1}$  in $S^3$  linked with $C_1$ so that  $ {\ell} k ( C_{g_1}, C_1 ) =1 $.  Consider now a second oriented knot $C_2 \subset S^3 - C_1$, the class $[C_2]$ of $C_2$ in  $H_1(S^3 - C_1)$ is just determined by the linking number of $C_1 $ and $C_2$. Indeed, by using additive notations, one has 
\be 
[C_2 ] = n \, g_1 \quad \Longleftrightarrow \quad {\ell} k ( C_2, C_1 )=n \; . 
\label{20}
\ee
Moreover, if $C_{2{\rm f}} $ is a framing for $C_2$, one finds 
\be
H_1(S^3 - C_1) \ni [C_{2{\rm f}} ] = [C_2 ]  \quad , \quad {\ell} k ( C_2, C_1 )=  {\ell} k (C_{2{\rm f}} ,  C_1 )\; . 
\label{21}
\ee

Let us now consider  a framed, oriented and coloured link $L$ with simplicial satellite ${\widetilde L}\subset S^3$, the associated abelian link invariant is given by 
\be 
\langle W( L) \rangle \bigr |_{S^3} = \langle W({\widetilde L}) \rangle \bigr |_{S^3} = 
\exp \Bigl \{ -( 2 i \pi / 4k)  \sum_{ij}  {\widetilde {\mathbb L}}_{ij}  \Bigr  \} \,  . 
\label{22}
\ee
$ {\widetilde {\mathbb L}}_{ij} $ denotes the linking matrix of $\widetilde L$ which can be written as 
\be 
 {\widetilde {\mathbb L}}_{ij} =  {\ell} k ( \widetilde C_{i {\rm f}} , \widetilde C_j  )\; , 
 \label{23}
 \ee
 where $ \widetilde C_j $ represents the $j$-th component of ${\widetilde L}$ and $ \widetilde C_{i {\rm f}} $ is the framing of the component  $ \widetilde C_i \subset {\widetilde L}$. If $g_j $ denotes  the $j$-th generator of $H_1(S^3 - {\widetilde L} ) $ which is associated with the component $\widetilde C_j$, the class  $ [ \widetilde C_{i {\rm f}} ] \in H_1(S^3 - {\widetilde L} )$ of the component $\widetilde C_{i {\rm f}}$  can be written as 
 \be 
  [ \widetilde C_{i {\rm f}} ] = \sum_j {\ell} k ( \widetilde C_{i {\rm f}} , \widetilde C_j  )\,  g_j \; . 
  \label{24}
 \ee
So, the class  $[{\widetilde L}_{\rm f}] \in H_1(S^3 - {\widetilde L}) $ of the link 
  ${\widetilde L}_{\rm f}$, which is the union of the framings 
\be 
 {\widetilde L}_{\rm f}=\bigcup_i \widetilde C_{i {\rm f}} \; , 
 \label{25}
 \ee
 is just 
 \be 
[{\widetilde L}_{\rm f}] = \sum_{i, j} {\ell} k ( \widetilde C_{i {\rm f}} , \widetilde C_j  ) \, g_j = \sum_{i,j}   {\widetilde {\mathbb L}}_{ij} \, g_j     \; . 
  \label{26}
 \ee
By comparing equations (\ref{22}) and (\ref{26}) one finds  that the value of the abelian link invariant $\langle W( L) \rangle \bigr |_{S^3} = W( {\widetilde L}) \rangle \bigr |_{S^3}$ is completely determined by the homology class $[{\widetilde L}_{\rm f}]$ of ${\widetilde L}_{\rm f}$ in $ H_1(S^3 - {\widetilde L} )$.

The abelian link invariant (\ref{22})  also admits the following interpretation. Let $S_{\widetilde L}$ be a Seifert surface associated with the link ${\widetilde L} \subset S^3$; $S_{\widetilde L}$ is connected oriented (bicollared) with boundary $\partial S_{\widetilde L} = {\widetilde L}$. Let us denote by ${\widetilde L}_{\rm f} \cap S_{\widetilde L} $ the sum ---by taking into account the signs--- of the intersections  of the link ${\widetilde L}_{\rm f}$ with the surface $S_{\widetilde L} $. Then, as a consequence of a possible definition of the linking number \cite{ROL}, one has 
 \be 
\langle W( L) \rangle \bigr |_{S^3} = \langle W({\widetilde L}) \rangle \bigr |_{S^3} = 
\exp \Bigl \{ -( 2 i \pi / 4k)  \, {\widetilde L}_{\rm f} \cap S_{\widetilde L}   \Bigr  \} \,  . 
\label{27}
\ee

\subsection {Sum of knots and cyclic covering}  We now describe another possible interpretation of the abelian link invariants which makes use of the coverings of the complement of the links. Let us first introduce the concept of sum of knots. 

 \vskip 0.3 truecm

\begin{figure}[htbp]
\centerline{\includegraphics[width=2.70in]{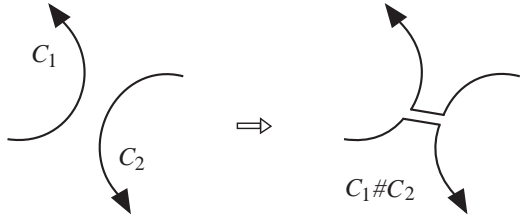}}
\caption {Sum of knots $C_1$ and $C_2$.}
\label{fig1}
\end{figure}

\vskip 0.1 truecm
 
\begin{definition}   Let $C_1$ and $C_2$ be two oriented and framed components of a link $L$, and let both components $C_1$ and $C_2$ have the same colour $q$.  By joining  $C_1$ and $C_2$ in the way shown in Figure~1, one obtains the knot $C_1 \# C_2$, that we call the {\it sum of $C_1$ and $C_2$}. The framing $(C_1 \# C_2)_{\rm f} $ of $C_1 \# C_2$ is defined to be the sum of the framings $C_{1{\rm f}} \# C_{2{\rm f}} $ so that 
\be 
\ell k   ( (C_1 \# C_2)_{\rm f} , C_1 \# C_2  ) = {\ell} k ( C_{1{\rm f}} , C_1 ) + {\ell} k (  C_{2{\rm f}} , C_2 ) + 2 \, {\ell} k ( C_1 , C_2 ) \; . 
\label{28}
\ee
\end{definition}

\begin{remark}
Note that,  when $C_1$ and $C_2$ belong to disjoint balls, the sum  $C_1 \# C_2$ coincides with the connected sum \cite{ROL} of $C_1$ and $C_2$; in general, $C_1$ and $C_2$ may be linked and tied together.  Note also that the linking number of $C_1 \# C_2$ with a generic component $C_j $ of the link $L$, with $j \geq 3 $, is just the sum of the linking numbers 
\be 
\ell k   ( C_1 \# C_2 , Cj  ) = {\ell} k ( C_1 , C_j ) + {\ell} k ( C_2 , C_j ) \; . 
  \label{29}
\ee
\end{remark}

Now, the value of each observable (\ref {12}) is invariant under the replacement of 
$C_1$ and $C_2$ by their sum $C_1 \# C_2$.  Indeed, as a consequence of the substitution of $C_1$ and $C_2$ with the sum $C_1\# C_2$, the linking matrix gets modified; instead of the first two rows and the first two columns of $   {\mathbb L}_{ij} $ one has a new single row and a new single columm.  But the relations (\ref{28}) and (\ref{29}) imply that the sum $ \sum_{ij} q_i  {\mathbb L}_{ij} q_j $ remains unchanged. 

\begin{definition}   For each coloured, oriented and framed link $L$ in $S^3$, consider its simplicial satellite $\widetilde L$. All the components of $\widetilde L$ have the same (unitary) colour;  therefore, one can recursively take the sum of the components of $\widetilde L$ so that, in the end, one obtains a single knot $L^\#$ that we call an {\it equivalent knot}  of $L$. 
\end{definition}

\noindent By construction 
\be 
 \langle W(L) \rangle \Bigr |_{S^3} = \langle W( L^\#) \rangle \Bigr |_{S^3}  \; , \; \; \forall L\subset S^3  \; . 
 \label{30}
 \ee 

Consider now an equivalent knot $L^\# $ of the link $L$ and let $L^\#_{\rm f}$ be the framing of $L^\#$. From equation (\ref{30}) it follows 
\be 
\langle W( L) \rangle \bigr |_{S^3} = \langle W(L^\# ) \rangle \bigr |_{S^3} = 
\exp \Bigl \{ -( 2 i \pi / 4k)  \, \ell k ( L^\#_{\rm f} , L^\# )   \Bigr  \} \,  . 
\label{31}
\ee
This equation shows that the expectation value $\langle W( L) \rangle \bigr |_{S^3} $ is fixed by the homology group $H_1(S^3 - L^\# )$.

Finally,   let $S_{L^\#}$ be a Seifert surface associated with  $L^\# \subset S^3$. Let $\pi :  Y_\infty  \rightarrow Y $ be the infinite cyclic cover of $Y=S^3 - {L^\#}$ and let $\tau $ be the generator of covering translations which generates Aut($ Y_\infty $).  $Y_\infty $ can be obtained \cite{ROL} by gluing ---in a orientation preserving way--- infinite copies of $S^3 - S_{L^\#}$ along (the images of) the surface $S_{L^\#}$.  Consider $L^\#_{\rm f}$ as a loop in $Y$  based at, for example,  the point $y_0 \in  L^\#_{\rm f}$.  Let the path $( L^\#_{\rm f} )_\infty $ in $Y_\infty $  be the lifting of $L^\#_{\rm f} $ based at a chosen point $(y_0)_\infty \in \pi^{-1}(y_0)$ and let $(y_1)_\infty \in \pi^{-1} (y_0) $ be the terminal point of the path $ (L^\#_{\rm f} )_\infty $.  There is a unique integer $n $ such that   $\tau^n ( y_0)_\infty =  ( y_1)_\infty$; this integer precisely determines the value of the observable 
 \be 
\langle W( L) \rangle \bigr |_{S^3} = \langle W(L^\#) \rangle \bigr |_{S^3} = 
\exp \Bigl \{ -( 2 i \pi / 4k)  \, n   \Bigr  \}    
\label{32}
\ee
because,  in agreement with equation (\ref{27}), in going along $L^\#_{\rm f}$, $n$ simply counts (by taking into account the signs) how many times one runs across the Seifert surface  $S_{L^\# }$.  Expression (\ref{32}) is periodic in $n$ with period $4k$; so, for fixed integer $k$, instead of the infinite cyclic cover $Y_\infty $, one can actually consider the $4k$-fold cyclic cover of $Y=S^3 - {L^\# }$. 

\subsection {Homology spheres}  In order to study the properties of the observables in homology spheres, we need to recall the meaning  of the surgery instruction which is described by a framed surgery link ${\mathcal L} \subset S^3$. 
Let $\{ {\mathcal L}_i \}$ (with $i=1,2,..., N_{\mathcal L}$) be the  link components of ${\mathcal L}$ with framings $\{ {\mathcal L}_{i {\rm f}} \}$. The 3-manifold $M_{\mathcal L}$  which corresponds to $\mathcal L$ can be obtained by means of the following operations; for each link component ${\mathcal L}_i $,  

\begin{itemize}

\item {}  remove from $S^3$ the interior $\! \! \vint $ of a tubular neighbourhood $V_i$ of the component ${\mathcal L}_i$; 

\item {} sew the solid torus $V_i$ on   $S^3 - \! \! \vint  \, $  by means of the boundaries identification given by a homeomorphism $h_i : \partial V_i \rightarrow \partial ( S^3 - \! \! \vint ) $ which sends a meridian $\mu_i $ of $V_i$ into the framing  $ {\mathcal L}_{i {\rm f}} $ of ${\mathcal L}_i$, i.e. $ {\mathcal L}_{i {\rm f}} = h_i ( \mu_i )\in \partial  ( S^3 - \! \! \vint  ) $. 

\end{itemize}

\noindent One example of surgery is depicted in Figure~2; in this case, the surgery link coincides with the trefoil knot with surgery coefficient 2. 

 \vskip 0.3 truecm

\begin{figure}[htbp]
\centerline{\includegraphics[width=2.50in]{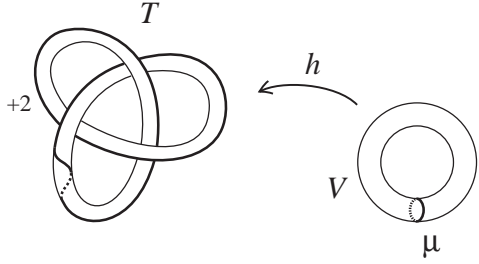}}
\caption{One example of surgery over the trefoil $T$ with surgery coefficient +2;  according to the homeomorphism $h$, the image of the meridian $\mu$ of the solid torus $V$  is a framing of $T$ with linking number $+2$ with $T$.}
\label{fig2}
\end{figure}

\vskip 0.1 truecm

Let $g_i $ be the generator of  $H_1 ( S^3  - {\mathcal L} )$ which is associated with the link component ${\mathcal L}_i$, with $i=1,2,..., N_{\mathcal L}$, where an orientation has been introduced for each  ${\mathcal L}_i$. Since the meridians $\{ \mu_i \}$ of the tubular neighbourhoods $\{  V_i \}$ are homologically trivial, their images $ {\mathcal L}_{i {\rm f}} = h_i ( \mu_i )$ also must be homologically trivial. So,  
the homology group $ H_1 ( M_{\mathcal L} ) $ of the manifold $M_{\mathcal L}$ admits the following presentation ---with $N_{\mathcal L} $ generators and at most $N_{\mathcal L} $ nontrivial relations--- 
\be
 H_1 ( M_{\mathcal L} ) = \langle \, g_1 , g_2 , \cdots  ,  g_{N_{\mathcal L}}  \; |\,  [{\mathcal L}_{1 {\rm f}}]  =0 , [ {\mathcal L}_{2 {\rm f}}] =0 , \cdots \, \rangle \; ,   
 \label{33}
 \ee
with $  [{\mathcal L}_{i {\rm f}}]  \in H_1 ( S^3  - {\mathcal L} )$ for $i=1,2,..., N_{\mathcal L}$. If $M_{\mathcal L}$ is a homology sphere, $ H_1 ( M_{\mathcal L} ) $ must be trivial. In this case, the set of relations 
\be 
[{\mathcal L}_{i {\rm f}}]  \equiv   \sum_j {\ell} k ( {\mathcal L}_{i {\rm f}} , {\mathcal L}_j  )\,  g_j = 0 \quad , \quad {\rm ~for~ } i =1,2,..., N_{\mathcal L} 
  \label{34}
 \ee
must only admit the trivial solution $g_1 = g_2 = \cdots =0$. This means that,  when ---by means of Kirby moves--- the linking matrix of the surgery link is reduced in diagonal form, each diagonal matrix element must  coincide with $+ 1$ or $-1$. In fact, the following theorem  has been proved \cite {SA}.   

\begin{theorem} \label {TH} Each homology 3-sphere admits a surgery presentation in $S^3$ described by a surgery link ${\mathcal L}$ which is algebraically split  
(i.e. ${\ell} k ( {\mathcal L}_i , {\mathcal L}_j  )=0 \; \forall \, i \not= j$)  and has surgery coefficients equal to $\pm 1$  (i.e. ${\ell} k ( {\mathcal L}_{i {\rm f}} , {\mathcal L}_i  ) = \pm 1 \; \forall \, i $). 
\end{theorem}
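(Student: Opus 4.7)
The plan is to begin from any integer Dehn surgery presentation of the homology sphere $M$ as a framed link $\mathcal{L}\subset S^3$; such a presentation exists by the Lickorish--Wallace theorem. The linking matrix $\Lambda$ of $\mathcal{L}$, with off-diagonal entries $\ell k(\mathcal{L}_i,\mathcal{L}_j)$ and diagonal entries $\ell k(\mathcal{L}_i,\mathcal{L}_{i\mathrm{f}})$, is precisely a presentation matrix for $H_1(M)$, as recorded in (\ref{33})--(\ref{34}). Triviality of $H_1(M)$ therefore forces $|\det\Lambda|=1$, so $\Lambda$ is a unimodular symmetric integer matrix. The aim is to modify $\mathcal{L}$ by Kirby moves until $\Lambda$ becomes diagonal with entries $\pm1$: the vanishing of the off-diagonal entries is exactly the algebraically split condition, and the unit diagonal entries are the desired surgery coefficients.

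The next step is to track how Kirby moves act on $\Lambda$. A handle slide (band-summing $\mathcal{L}_i$ with a parallel copy of $\mathcal{L}_j$) realizes the congruence $\Lambda\mapsto E^{T}\Lambda E$ for an elementary matrix $E\in GL_{n}(\mathbb{Z})$; composing such moves, one can realize any $GL_{n}(\mathbb{Z})$-congruence of $\Lambda$. A stabilization move, i.e.\ adjoining to $\mathcal{L}$ an unknotted component of framing $\pm1$ split from the rest by an embedded sphere, preserves the 3-manifold obtained by surgery (since $\pm1$-surgery on the unknot gives back $S^{3}$) and changes $\Lambda$ to $\Lambda\oplus\langle\pm1\rangle$. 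Hence, up to Kirby equivalence, $\Lambda$ can be replaced by any form in its stable $GL(\mathbb{Z})$-congruence class.

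The decisive input is then the classification of integral unimodular symmetric bilinear forms. After at most one $+1$ and one $-1$ stabilization, $\Lambda\oplus\langle 1\rangle\oplus\langle -1\rangle$ is simultaneously indefinite and odd; by the classical theorem (Serre, \emph{Cours d'arithm\'etique}) every indefinite odd unimodular form over $\mathbb{Z}$ is isomorphic to $\langle 1\rangle^{p}\oplus\langle -1\rangle^{q}$. Realizing this isomorphism by an explicit $GL(\mathbb{Z})$-congruence and performing the corresponding finite sequence of handle slides delivers a surgery link whose linking matrix is diagonal with $\pm1$ entries, completing the proof.

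The main obstacle is this arithmetic step. The Kirby-calculus bookkeeping that translates elementary unimodular congruences and $\langle\pm 1\rangle$-summands into handle slides and blow-ups is essentially routine; what one must genuinely import from lattice theory is that stabilization reaches the indefinite odd case, and that in this case the form is diagonalizable over $\mathbb{Z}$. Without this classification, one can reduce $\Lambda$ to Smith normal form by row and column operations but not to a \emph{congruent} diagonal form with $\pm 1$ entries, which is what the algebraically split $\pm 1$-framing condition requires.
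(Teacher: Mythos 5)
Your proof is correct, and it is essentially the standard argument for this statement; it is worth noting, though, that the paper does not actually prove Theorem~\ref{TH}. The text preceding the theorem only observes that triviality of $H_1(M_{\mathcal L})$ forces the relations (\ref{34}) to admit solely the trivial solution---equivalently, that the linking matrix is unimodular---and then asserts that \emph{when} the matrix is reduced to diagonal form by Kirby moves the entries must be $\pm1$, delegating the existence of such a reduction to the reference \cite{SA}. Your argument supplies precisely this missing existence statement, and it correctly isolates the nontrivial point that the paper's informal remark glosses over: a definite unimodular lattice (e.g.\ $E_8$) need not be diagonalizable over $\mathbb{Z}$, so one must first stabilize by $\langle 1\rangle\oplus\langle -1\rangle$ (blow-ups, which leave the surgered manifold unchanged) in order to land in the indefinite odd case, where the classification of unimodular symmetric forms gives $\langle 1\rangle^{p}\oplus\langle -1\rangle^{q}$. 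One minor addendum to your Kirby-calculus bookkeeping: handle slides realize the elementary congruences and hence $SL_{n}(\mathbb{Z})$; to realize an arbitrary $GL_{n}(\mathbb{Z})$ change of basis one should also invoke orientation reversal of a link component, which negates a row and the corresponding column and does not affect the surgered manifold. With that observation your argument is complete and self-contained, which is more than the paper itself provides.
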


\noindent We can now demonstrate the following result. 

\begin{theorem}
The sets of the abelian Chern-Simons observables in $S^3$ and in any homology  3-sphere $M_0$  ($H_1(M_0)=0$) coincide,  
\be
\biggl \{ \langle W(L) \rangle \Bigr |_{M_0} \biggr \} = \biggl \{ \langle W(L) \rangle \Bigr |_{S^3} \biggr \} \; . 
\label{18}
\ee
\end{theorem}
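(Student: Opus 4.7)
My plan is to prove the two inclusions separately, using the surgery formula (\ref{15}) together with Theorem~\ref{TH}. The easy inclusion is $\{\langle W(L)\rangle|_{S^3}\}\subseteq\{\langle W(L)\rangle|_{M_0}\}$: given any coloured framed $L'\subset S^3$, the complement $S^3\setminus \mathcal{L}$ of the surgery link is open, so I can isotope a copy $L$ of $L'$ into a small ball $B^3\subset S^3\setminus\mathcal{L}$. Then all linking numbers $\ell k(C_a,\mathcal{L}_i)$ vanish, the numerator in (\ref{15}) factorises as $\langle W(L)\rangle|_{S^3}\cdot\langle W(\mathcal{L})\rangle|_{S^3}$, and the denominator cancels, giving $\langle W(L)\rangle|_{M_0}=\langle W(L')\rangle|_{S^3}$.

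The interesting inclusion is the reverse one. Fix $L\subset M_0$ with components $C_a$ of colour $p_a$, and use Theorem~\ref{TH} to choose a surgery presentation in which $\mathcal{L}$ is algebraically split with surgery coefficients $\epsilon_i=\pm 1$. Write $\ell_{ai}=\ell k(C_a,\mathcal{L}_i)$ and let $Q_L(p)=\sum_{ab}p_a\mathbb{L}^L_{ab}p_b$ denote the quadratic form associated with $L$ in $S^3$. According to (\ref{15}), (\ref{16}) and (\ref{12}), the numerator in (\ref{15}) is
\begin{equation*}
\sum_{q_1,\dots,q_{N_\mathcal{L}}=0}^{2k-1}\exp\Bigl\{-\tfrac{2\pi i}{4k}\bigl[Q_L(p)+2\sum_{a,i}p_a\ell_{ai}q_i+\sum_i\epsilon_i q_i^2\bigr]\Bigr\},
\end{equation*}
and the denominator is the same expression with $Q_L(p)\equiv 0$ and $p_a\equiv 0$.

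The central computation is completing the square in $q_i$. Because $\epsilon_i^2=1$, the identity $\epsilon_i q_i^2+2c_i q_i=\epsilon_i(q_i+\epsilon_i c_i)^2-\epsilon_i c_i^2$ holds with $c_i=\sum_a p_a\ell_{ai}\in\mathbb{Z}$. Since the function $\exp\{-\frac{2\pi i}{4k}\epsilon_i q_i^2\}$ is periodic in $q_i$ with period $2k$, and $\epsilon_i c_i\in\mathbb{Z}$, the shift $q_i\mapsto q_i+\epsilon_i c_i$ leaves the sum over a complete residue system mod $2k$ invariant. Hence the numerator equals $\exp\{-\frac{2\pi i}{4k}[Q_L(p)-\sum_i\epsilon_i c_i^2]\}$ times the denominator, and the denominator drops out, giving
\begin{equation*}
\langle W(L)\rangle\bigr|_{M_0}=\exp\Bigl\{-\tfrac{2\pi i}{4k}\sum_{a,b}p_a\tilde{\mathbb{L}}_{ab}p_b\Bigr\},\qquad \tilde{\mathbb{L}}_{ab}:=\mathbb{L}^L_{ab}-\sum_i\epsilon_i\ell_{ai}\ell_{bi}.
\end{equation*}
The matrix $\tilde{\mathbb{L}}$ is symmetric with integer entries, so by (\ref{12}) this is exactly $\langle W(L')\rangle|_{S^3}$ for any link $L'\subset S^3$ having $N$ components of colours $p_a$ whose linking matrix is $\tilde{\mathbb{L}}$. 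Such an $L'$ is easily constructed in $S^3$: take $N$ unknots, clasp the $a$-th and $b$-th component to achieve linking number $\tilde{\mathbb{L}}_{ab}$ for $a\neq b$, and twist the framing of the $a$-th component to achieve $\tilde{\mathbb{L}}_{aa}$.

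The main obstacle, and the step that forces the use of Theorem~\ref{TH}, is the completion of the square: for a generic diagonal entry $\epsilon_i$ the shift $q_i\mapsto q_i+c_i/\epsilon_i$ need not be by an integer, and the sum mod $2k$ would not be shift-invariant, producing additional Gaussian phases that would obstruct the identification with an $S^3$ observable. The hypothesis $H_1(M_0)=0$ is precisely what, via Theorem~\ref{TH}, allows us to take $\epsilon_i=\pm 1$ and an algebraically split $\mathcal{L}$, so that the Gaussian sum over surgery colours factorises and cancels cleanly between numerator and denominator. Once that cancellation is achieved, the remaining task, namely realising an arbitrary integer symmetric matrix as a linking matrix in $S^3$, is elementary.
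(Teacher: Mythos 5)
Your proof is correct, and its core is the same as the paper's: both invoke Theorem~\ref{TH} to get an algebraically split surgery link with coefficients $\pm 1$, reduce the numerator and denominator of (\ref{15}) to factorized Gauss sums, and complete the square using $\epsilon_i^2=1$ so that the $\sqrt{2k}$ factors cancel — your identity $\epsilon_i q_i^2+2c_iq_i=\epsilon_i(q_i+\epsilon_ic_i)^2-\epsilon_ic_i^2$ is exactly the computation behind equation (\ref{38}), with your $c_i$ equal to the paper's $t_i$. The two genuine differences are in the packaging. First, after the cancellation the paper interprets the residual phase $\exp\{-(2i\pi/4k)(\mp t_i^2)\}$ geometrically as a $(\mp 1)$ twist homeomorphism acting on the strands of $L$ linked with ${\mathcal L}_i$, so that $L'$ is produced as an actual image of $L$ under twists; you instead read off the modified quadratic form $\tilde{\mathbb{L}}_{ab}=\mathbb{L}^L_{ab}-\sum_i\epsilon_i\ell_{ai}\ell_{bi}$ and realize this symmetric integer matrix by an explicitly clasped and framed link in $S^3$. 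Your route is more algebraic and self-contained (it does not appeal to the theory of twist homeomorphisms), at the cost of producing an $L'$ that is only matched to $L$ through its linking matrix rather than through a geometric operation; for the abelian invariant this is all that is needed. Second, you spell out the inclusion $\{\langle W(L)\rangle|_{S^3}\}\subseteq\{\langle W(L)\rangle|_{M_0}\}$ by isotoping a copy of $L'$ into a ball disjoint from ${\mathcal L}$, a step the paper leaves implicit when it asserts that the two sets "coincide"; making it explicit is a small but real improvement in completeness.
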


\begin{proof}  Let $M_0$ be a homology sphere and let ${\mathcal L} \subset S^3$ be a  surgery link, which corresponds to a surgery presentation of $M_0$ in $S^3$, such that  the properties specified by  Theorem~3.4 are satisfied (that is, ${\mathcal L}$ is algebraically split with surgery coefficients $\pm 1$).     Any link $L$ in $M_0$ can be described by a link, that we shall also denote by $L$,  in the complement of ${\mathcal L}$ in $S^3$.   In order to compute the observable 
$ \langle W ( L) \rangle \bigr |_{M_0} $ we shall use the surgery method described in equation (\ref{15}). The denominator of expression (\ref{15}) contains the expectation value 
\be
\ba
 \langle  W({\mathcal L}) \rangle \Bigr |_{S^3} &=&   \sum_{q_1, q_2,...} \exp \Bigl \{ -( 2 i \pi / 4k)  \sum_{ij} q_i   \, {\ell} k ( {\mathcal L}_{i {\rm f}} , {\mathcal L}_j  )\, q_j  \Bigr  \} = \\
&=&   \prod_{i}\sum_{q_i}  \exp \Bigl \{ -( 2 i \pi / 4k) q^2_i   \, {\ell} k ( {\mathcal L}_{i {\rm f}} , {\mathcal L}_i  )   \Bigr  \} \; . 
\ea
\label{35}
\ee
Let us consider each term of the product entering equation (\ref{35}); since ${\ell} k ( {\mathcal L}_{i {\rm f}} , {\mathcal L}_i ) = \pm 1 $  one finds \cite{HL}
\be
\sum_{q=0}^{2k-1} \exp \Bigl \{ \pm ( 2 i \pi / 4k) q^2     \Bigr  \} = e^{\pm i\pi /4}  \sqrt {2k} 
\; , 
\label{36}
\ee
and then $ \langle  W({\mathcal L}) \rangle \bigr |_{S^3} \not= 0$.  This means that equation (\ref{15}) is well defined; let us now consider the numerator of equation (\ref{15}). 

Let us denote by $L_\alpha  $, with $\alpha= 1,2,..., N_{L}$,  the $\alpha$-th component of the link $L$ with colour $q_\alpha $, and  let 
\be
 t_i = \sum_\alpha q_\alpha \, \ell k ( {\mathcal L}_i , L_\alpha ) \; . 
\label{37}
\ee
Then, in the computation of the numerator   $\langle W(L) W({\mathcal L}) \rangle \bigr|_{S^3}$   of the ratio (\ref{15}),  the contribution of the generic component ${\mathcal L}_i$ of the surgery link $\mathcal L$ is given by the multiplicative factor 
\be  
\sum_{q_i=0}^{2k-1} \exp \Bigl \{ - ( 2 i \pi / 4k) \left [ (\pm   q_i^2)    + 2 q_i t_i \right ]  \Bigr  \} = 
e^{(\mp ) i\pi /4}  \sqrt {2k} \, \exp \Bigl \{ - ( 2 i \pi / 4k) ( \mp  t_i^2 ) \Bigr \} \, . 
\label{38}
\ee
In the computation of ratio (\ref{15}), the term $e^{(\mp ) i \pi /4}  \sqrt {2k}$ cancels with the same factor appearing in the denominator, see equation (\ref{36}). Whereas the remaining term $\exp \bigl \{ - ( 2 i \pi / 4k) ( \mp  t_i^2 ) \bigr \} $ corresponds to the effect of a $(\mp  1)$ twist homeomorphism of the link components of $L $ which are linked with ${\mathcal L}_i$. 

So, in the computation of $ \langle W ( L) \rangle \bigr |_{M_0} $, the global effect of the surgery link ${\mathcal L}$ is just to introduce of  certain number of twist homeomorphisms on the link $L$ whose expectation value has eventually to be computed in $S^3$.   This means that,  for each link $L\subset M_0$ one finds a suitable link $L^\prime \subset S^3$ such that $ \langle W ( L) \rangle \bigr |_{M_0} = \langle W ( L^\prime ) \rangle \bigr |_{S^3} $. Consequently, the sets of expectation values 
$\left \{ \langle W(L) \rangle \bigr |_{M_0} \right \} $ and  $ \left \{ \langle W(L) \rangle \bigr |_{S^3} \right \}$ coincide; this concludes the proof. 

\end{proof}

\begin{remark}
We would like to present now a different proof of Theorem~3.5 which is not based on algebraic manipulations. The new proof makes use of the properties of the Kirby moves and is entirely based on the fact that the abelian link invariants only depend on the linking numbers between the link components.   The starting point is that a function of the abelian link invariants, which provides a realization of the surgery rules, exists (equations (\ref{15}) and (\ref{16})).  Let us consider a surgery presentation of the homology sphere $M_0$ in $S^3$  which is described by a surgery link ${\mathcal L}$; according to Theorem \ref{TH}, one can assume that  ${\mathcal L}$ is   algebraically split and has surgery coefficients $\pm 1 $. Since all the linking numbers between the link components of $\mathcal L$ are vanishing,  the components ${\mathcal L}_i$ can be untied so that one obtains the distant union of knots, each with surgery coefficient $\pm1 $.  By means of a finite number of overcrossing/undercrossing exchanges, each knot can be unknotted. Thus, for each surgery knot  one can introduce \cite {ROL} (by means of Kirby moves)  a finite number of new elementary surgery components  ---which are given by unknots with surgery coefficients $\pm 1$---  which unknot the knot and have vanishing linking number with the knot. One example of this move is shown in Figure~3. 

\vskip 0.3 truecm

\begin{figure}[htbp]
\centerline{\includegraphics[width=1.98in]{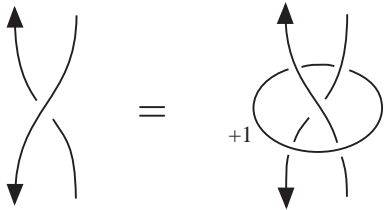}}
\caption{One example of Kirby move: by means of the introduction of the new surgery component, an undercrossing is replaced by an overcrossing.}
\label{fig3}
\end{figure}

\vskip 0.1 truecm

\noindent  Again, since all the linking numbers are vanishing,  these new link components also can be untied completely. As a result, the entire set of surgery instructions is effectively described  by the distant union of unknots with surgery coefficients $\pm 1$. The action of this surgery on $S^3$  is trivial, it maps $S^3$ into $S^3$,  because it simply introduces a set of disjoint  elementary $(\pm 1 )$ twist homemorphisms which possibly act on  the links $L\subset S^3$.  This is precisely in  agreement with the conclusions that have been obtained in the previous algebraic version of the proof. So, the set of the abelian Chern-Simons  observables in a generic homology 3-sphere $M_0$ coincides with the set of observables in the 3-sphere $S^3$. 

\end{remark}

\subsection {Homologically trivial links} 

In this section we shall consider homologically trivial links in a generic manifold $M$. Let us  firstly  introduce a general property of the expectation values. 

\begin{property}
Suppose that the link $L_U $ is the union of the link $L$ and of the unknot $U$ in a generic manifold $M$,  $L_U  = L \cup U\subset M$.   
If the unknot $U$ belongs to a 3-ball which is disjoint from the link $L $ and $U$ has trivial framing (i.e. its framing $U_{\rm f}$ satisfies $\ell k (U , U_{\rm f} ) =0$), then, independently of the colour $q$ associated with $U$, one finds 
\be
\langle W(L_U ) \rangle \Bigr |_{M} = \langle W (U) W(L) \rangle \Bigr |_{M} = \langle W(L) \rangle \Bigr |_{M} \; . 
\label{39}
\ee
\end{property}

\begin{proof}
When $M=S^3$, equation (\ref{39}) follows immediately from expression (\ref{12}). In the case of a generic 3-manifold $M$, equation (\ref{39}) is a consequence of definition (\ref{15}) and of the ambient isotopy invariance of the observables. In facts,   if $U$ belongs to a 3-ball, $U$ is ambient isotopic with an unknot  which belongs to a 3-ball which is disjoint from the surgery link ${\mathcal L}$ entering equation (\ref{15}) ---and then this unknot is  not linked with ${\mathcal L}$. Moreover, since  $L_U $ is the distant union of $U$ and $L$, the unknot $U$ is not linked with the components of the link $L$; finally, $U$ has trivial framing and then equality (\ref{39}) follows. 
\end{proof}

Let us now consider the observable $\langle W ( L) \rangle \bigr |_{M}$ which is associated with a link $L$ in a generic  3-manifold $M$. 

\begin{theorem}
Let  $M$ be a generic closed and oriented 3-manifold;  if the simplicial satellite ${\widetilde L} \subset M  $ of the link $L  $   is homologically trivial and the associated observable (\ref{15})  is well defined, then there exists a link ---that we  denote by $L^\prime$--- in $S^3$   such that   
\be
 \langle W(L) \rangle \Bigr |_{M}  =  \langle W(L^\prime) \rangle \Bigr |_{S^3}  \; . 
\label{19}
\ee
\end{theorem}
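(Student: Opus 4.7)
The plan is to combine the simplicial-satellite reduction (\ref{17}) with the surgery formula (\ref{15}) and a Gaussian-sum manipulation in the spirit of the proof of Theorem~3.5. First I would replace $L$ by $\widetilde{L}$ using (\ref{17}), so every component carries unit colour. Fix a surgery presentation of $M$ by a framed link $\mathcal{L}\subset S^3$ with $N_{\mathcal{L}}$ components, view $\widetilde{L}$ inside $S^3-\mathcal{L}$, and introduce the symmetric integer matrix $A_{ij}=\ell k(\mathcal{L}_{i{\rm f}},\mathcal{L}_j)$, the integer vector $t_i=\ell k(\mathcal{L}_i,\widetilde{L})$, and the scalar $B=\sum_{\alpha\beta}\widetilde{\mathbb{L}}_{\alpha\beta}$. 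These are the data that enter the Gaussian integrand in the numerator of (\ref{15}) once the quadratic form in all link charges is expanded.

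Next I would translate homological triviality of $\widetilde{L}$ in $M$ into the algebraic condition $A\vec{n}=\vec{t}$ for some $\vec{n}\in\mathbb{Z}^{N_{\mathcal{L}}}$. From the presentation (\ref{33}), $H_1(M)$ is the quotient of $H_1(S^3-\mathcal{L})\cong\mathbb{Z}^{N_{\mathcal{L}}}$ by the integer span of the rows of $A$, while $[\widetilde{L}]=\sum_j t_j g_j$ in $H_1(S^3-\mathcal{L})$ by the linking-number rule generalising (\ref{20}). Hence $[\widetilde{L}]=0$ in $H_1(M)$ is exactly the statement that $\vec{t}$ lies in the integer image of $A$.

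With this in hand the numerator of (\ref{15}) becomes the Gaussian sum $\sum_{\vec{q}\in\mathbb{Z}_{2k}^{N_{\mathcal{L}}}}\exp\bigl\{-\tfrac{2i\pi}{4k}(\vec{q}^{T}A\vec{q}+2\vec{q}^{T}\vec{t}+B)\bigr\}$, and the substitution $\vec{q}\mapsto\vec{q}+\vec{n}$ is a bijection of $\mathbb{Z}_{2k}^{N_{\mathcal{L}}}$ because $\vec{n}$ is integral. Using $A\vec{n}=\vec{t}$ this shift completes the square exactly: it kills the linear term and leaves the residual constant $-\vec{n}^{T}A\vec{n}=-\vec{n}^{T}\vec{t}$. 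The numerator therefore factorises as $\exp\{-\tfrac{2i\pi}{4k}(B-\vec{n}^{T}\vec{t})\}$ times the denominator, which is non-vanishing by assumption, so
\[
\langle W(L)\rangle\bigr|_M=\exp\Bigl\{-\tfrac{2i\pi}{4k}(B-\vec{n}^{T}\vec{t})\Bigr\}.
\]
To exhibit $L'$ I would take $L'\subset S^3$ to be the disjoint union of $\widetilde{L}$ with, for each $i$, an extra component $K_i$ obtained as a parallel copy of $\mathcal{L}_i$ along the framing direction, carrying colour $-n_i\pmod{2k}$ and framed by a further pushoff of $\mathcal{L}_i$; since pushoffs preserve linking numbers with disjoint curves, one has $\ell k(K_i,\widetilde{L})=t_i$ and $\ell k(K_i,K_{j{\rm f}})=A_{ij}$, so (\ref{12}) gives for $L'$ the exponent $B-2\vec{n}^{T}\vec{t}+\vec{n}^{T}A\vec{n}=B-\vec{n}^{T}\vec{t}$, matching $\langle W(L)\rangle|_M$.

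The most delicate step is the algebraic one: one must verify that the relation lattice in the presentation (\ref{33}) is exactly the integer row span of $A$, so that $[\widetilde{L}]=0$ in $H_1(M)$ yields an integer (not merely rational or modular) solution $\vec{n}$ of $A\vec{n}=\vec{t}$. Only then does the shift of summation variables act as a clean bijection of $\mathbb{Z}_{2k}^{N_{\mathcal{L}}}$ and convert the full Gaussian sum into the single-phase answer that is then matched by the explicit $L'$ above.
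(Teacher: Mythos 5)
Your proof is correct, but it reorganises the argument along a genuinely different (more algebraic) route than the paper's. Both arguments hinge on the same key translation: homological triviality of $\widetilde L$ in $M$ means $[\widetilde L]=\sum_i n_i\,[{\mathcal L}_{i\rm f}]$ in $H_1(S^3-{\mathcal L})$ for \emph{integer} $n_i$, i.e.\ $\vec t=A\vec n$ over $\mathbb Z$ --- the ``delicate step'' you flag is exactly the paper's equation (\ref{41}) and is standard for the surgery presentation (\ref{33}). From there the paper proceeds geometrically: it augments $\widetilde L$ by $n_i$ reversed-orientation, unit-colour parallel copies of each ${\mathcal L}_{i\rm f}$, invokes the fact that after surgery each ${\mathcal L}_{i\rm f}$ is a meridian bounding a disc (so, by Property~3.7, the augmentation does not change the observable in $M$), and then notes that the augmented link is homologically unlinked from every ${\mathcal L}_i$, so the surgery factor $\langle W({\mathcal L})\rangle|_{S^3}$ factorises and cancels. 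You instead complete the square in the Gaussian sum over the surgery colours, in the style of the paper's own proof of Theorem~3.5, obtain the explicit phase $\exp\{-(2i\pi/4k)(B-\vec n^{\,T}\vec t)\}$, and only afterwards exhibit an $L'\subset S^3$ realising it; your $L'$ is essentially the paper's (one copy of ${\mathcal L}_{i\rm f}$ with colour $-n_i$ versus $n_i$ reversed unit-colour copies, equivalent by the simplicial-satellite relation (\ref{17})). What your route buys is that it bypasses Property~3.7 and the geometric identification of the ${\mathcal L}_{i\rm f}$ with unknots in balls of $M$; the small price is the need to check that the summand is $2k$-periodic in each $q_i$ so that the shift is a bijection of $\mathbb Z_{2k}^{N_{\mathcal L}}$ (it is, since $A$ and $\vec t$ are integral). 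One cosmetic slip: with the convention $A\vec n=\vec t$ the square is completed by $\vec q\mapsto\vec q-\vec n$, not $\vec q+\vec n$; your final exponent $B-\vec n^{\,T}\vec t$ is nonetheless the correct one and matches both the value computed from (\ref{12}) for your $L'$ and the paper's conclusion.
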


\begin{proof}
In agreement with the surgery recipe of equations (\ref{15}) and (\ref{16}),  one has 
\be
\langle W ( L) \rangle \Bigr |_{M} =   
\langle W ( \widetilde L) \rangle \Bigr |_{M} =  
 \langle W ( \widetilde L ) \, W({\mathcal L}) \rangle  \Bigr |_{S^3}  \; {\Big  / } \;   \langle  W({\mathcal L}) \rangle \Bigr |_{S^3} \, ,
\label{40}
\ee
where  $\widetilde L$ is the simplicial satellite of $L$ and $\mathcal L$ is a surgery link ---with components $\{ {\mathcal L}_i \} $--- which corresponds to $M$. We assume that expression (\ref{40}) is well defined. Let us consider the presentation (\ref{33}) of the group $H_1(M)$. If the link $\widetilde L$  is homologically trivial  in $M$,  the class $[\widetilde L ]$ of $\widetilde L $ in $H_1 ( S^3 - {\mathcal L} ) $ can be written in the form
\be
H_1 ( S^3 - {\mathcal L} ) \ni [\widetilde L ] = \sum_i n_i \, [ {\mathcal L}_{i {\rm f}}] \; , \; \; \hbox {with~}  n_i \in {\mathbb Z} \; .  
\label{41}
\ee
where $ [ {\mathcal L}_{i {\rm f}}] \in H_1(S^3 - {\mathcal L})$.  Indeed,  each component $ {\mathcal L}_{i {\rm f}} $ is homologically  trivial in $H_1 (M)$ and, in the presentation (\ref{33}), all the constraints are precisely generated by the equations $[ {\mathcal L}_{i {\rm f}}]  =0$ for $i=1,2,..., N_{\mathcal L}$.

\noindent Let us now consider the new  link $L^\prime \subset M$ which, in the surgery presentation of $M$, is described by the link $L^\prime \subset S^3 - {\mathcal L}$ given by 
\be 
 L^\prime = \widetilde L \cup K_1 \cup K_2 \cup \cdots \cup K_m  \; ,    
 \label{42}
\ee
where  each knot $K_j $, with $j=1,2,...,m$, is an unknot with unitary colour  and $ m=\sum_j n_j$. More precisely, each of the first $n_1$ unknots of equation (\ref{42}),  $K_1, K_2, ..., K_{n_1} $,  is ambient isotopic with ${\mathcal L}_{1\rm f}$ with reversed orientation; each of the next $n_2$ unknots $K_{n_1 +1}, K_{n_1 +2}, ..., K_{n_1+n_2} $, is ambient isotopic with ${\mathcal L}_{2\rm f}$ with reversed orientation and so on. If $K_j$ is ambient isotopic with ${\mathcal L}_{i\rm f}$ with reversed orientation, the framing $K_{j \rm f}$ of $K_j$ is chosen in such a way that  $\ell k( K_j , K_{j \rm f})= \ell k (  {\mathcal L}_{i}, {\mathcal L}_{i\rm f} )$. According to the surgery instructions described in section~3.3, each component ${\mathcal L}_{i\rm f}$ is homeomorphic with a meridian of a solid torus and then ${\mathcal L}_{i\rm f}$ is ambient isotopic with an unknot which belong to a 3-ball that is disjoint from all the remaining link components. Consequently, each knot $K_j$ is ambient isotopic with an unknot which belongs to a 3-ball in $M$ and, by construction, this unknot has trivial framing.  So, in agreement with the Property~3.7, one has 
\be
\langle W ( L) \rangle \Bigr |_{M} =  \langle W ( L^\prime ) \rangle \Bigr |_{M} =    
 \langle W (  L^\prime ) \, W({\mathcal L}) \rangle  \Bigr |_{S^3}  \; {\Big  / } \;   \langle  W({\mathcal L}) \rangle \Bigr |_{S^3} \, . 
\label{43}
\ee
The class $[L^\prime ]$ of $L^\prime $ in $S^3 - {\mathcal L}$ follows from the definition (\ref{42}) and equation (\ref{41}), 
\be
H_1 ( S^3 - {\mathcal L} ) \ni [L^\prime ] = \sum_i n_i \, [ {\mathcal L}_{i {\rm f}}]  + \sum_{j}  [K_j] = 0 \; . 
\label{44}
\ee
This means that $L^\prime $ ---or its equivalent knot ${L^\prime}^\#$--- is not linked with each of the components of the surgery link ${\mathcal L}$. Consequently, in the computation of the ratio (\ref{43}), the expectation value $ \langle  W({\mathcal L}) \rangle \bigr |_{S^3}$ factorizes in the numerator and cancels out with the denominator, and finally one obtains
\be
\langle W ( L) \rangle \Bigr |_{M} =    
 \langle W (  L^\prime )   \Bigr |_{S^3}  \; . 
 \label{45}
\ee
To sum up,  if the simplicial satellite  ${\widetilde  L} \subset M $ of the link $L$  is homologically trivial, there exists a link $L^\prime \subset S^3$   such that equation (\ref{45}) is satisfied, and this concludes the proof. Finally, because of the colour periodicity property of observables, for fixed integer $k$ Theorem~3.8 actually holds when $\widetilde L$ is homologically trivial mod~$2k$. 

\end{proof}

\section{Three-manifold invariants}

The surgery rules and the 3-manifold invariant of Reshetikhin and Turaev \cite{RT} for the  gauge group $SU(2)$ can be generalized to the case in which the gauge group is $U(1)$. Let $M = M_{\mathcal L}$ be the 3-manifold which is obtained by means of the Dehn surgery  which is described by the surgery link ${\mathcal L}$ in $S^3$; the 3-manifold invariant $I_k(M )$ for the abelian $U(1)$ gauge group takes the form 
\be 
I_k(M) = I_k(M_{\mathcal L} ) = \left ( 2k \right )^{- N_{\mathcal L}/2} \, e^{i\pi   \sigma ({\mathcal L}) /4} \langle W({\mathcal L} ) \rangle \Bigr |_{S^3}\; , 
\label{46}
\ee
where  $N_{\mathcal L}$ denotes the number of components of $\mathcal L$ and $\sigma ({\mathcal L})$ represents the so-called signature of the linking matrix associated with ${\mathcal L}$, i.e. $\sigma ({\mathcal L}) = n_+ - n_- $ where $n_{\pm }$ is the number of positive/negative eigenvalues of the linking matrix which is defined by the framed  link ${\mathcal L}$. 
Expression (\ref{46}) is invariant under Kirby moves \cite{RT,  MOST, L} and therefore is invariant under orientation preserving homeomorphisms of the 3-manifold $M$. Note that the orientation of $M = M_{\mathcal L}$ is induced by the orientation of $S^3$ on which the surgery acts. 
A modification of the orientation of $M$ is equivalent to the replacement of $I_k(M ) $ by its complex conjugate $\overline {I_k(M )} $.  

\begin{remark}
The invariance under Kirby moves of expression (\ref{46}) can be used to understand the consistency of the surgery rules (\ref{15}) for the observables. In facts, if one multiplies the numerator and the denominator of equation (\ref{15}) by the same factor   $\left ( 2k \right )^{- N_{\mathcal L}/2} \, e^{i\pi   \sigma ({\mathcal L}) /4}$, the numerator and the denominator are separately invariant under Kirby moves. 
\end{remark}

Let us recall that,  according to the prescription (\ref{16}), in the computation of the expectation value $\langle W({\mathcal L} ) \rangle \bigr |_{S^3}$ one must take the sum over the values $q=0,1,..., 2k-1$ of the colour which is associated with each  component of the surgery link ${\mathcal L}$. This just corresponds to the standard Reshetikhin-Turaev prescription in the  case of gauge group  U(1). Actually, for fixed integer $k$, the Reshetikhin-Turaev invariant (\ref{46}) admits the following  natural generalization. 

\begin{definition}
As we have already mentioned, for fixed integer $k$  the colour space is isomorphic with ${\mathbb Z}_{2k}$. For each subgroup ${\mathbb Z}_{p}$ of ${\mathbb Z}_{2k}$, one can introduce the 3-manifold invariant $I_{(p,k)}(M) $ defined by 
\be 
I_{(p,k)}(M) = I_{(p,k)}(M_{\mathcal L}) =  a^{- N_{\mathcal L}/2} \, e^{i \varphi   \sigma ({\mathcal L}) } \langle W({\mathcal L} ) \rangle_{(p,k)} \Bigr |_{S^3}\; , 
\label{47}
\ee
where the positive real number $a$ and the phase factor $e^{i \varphi }$ are determined by 
\be  
\sum_{b=0}^{p-1} \exp \Bigl \{  - (  i \pi \,d  / p)  \,  b^2      \Bigr  \} = 
a\, e^{- i\varphi }   \, , 
\label{48}
\ee
in which $d = 2k / p$. $ \langle W({\mathcal L} ) \rangle_{(p,k)} \bigr |_{S^3}$ denotes the sum of the  expectation values  when the  colour of each link component takes the values $q=0, d, 2d,..., (p-1)d$.  One has $ I_{(2k,k)}(M_{\mathcal L})= I_k(M_{\mathcal L})$. 
\end{definition}

\noindent The proof that $I_{(p,k)}(M_{\mathcal L})$ is invariant under Kirby moves is based precisely on the same steps that enter the corresponding proof for  $I_k(M_{\mathcal L})$. 

A field theory interpretation of the Reshetikhin-Turaev invariant (\ref{46})  ---as a ratio of Chern-Simons partition functions--- has been proposed in \cite{L} and detailed  discussions on the properties of the invariant (\ref{46}) can be found in \cite{JP, FDE, HST}. 

\begin{property}
If the manifold $M_0$ is a homology 3-sphere, then $I_k(M_0) = 1$. 
\end{property}
\begin{proof}
By Theorem~3.4, $M_0$ admits a surgery presentation in $S^3$ in which the surgery link $\mathcal L$  is algebraically split with surgery coefficients $\pm 1$. Since the link components of $\mathcal L$ are not linked, in agreement with equation (\ref{35}) the expectation value $\langle W ({\mathcal L}) \rangle \bigr |_{S^3}$  is just the product of terms shown in equation (\ref{36}). These terms cancel with the normalization factor  $\left ( 2k \right )^{- N_{\mathcal L}/2} \, e^{i\pi   \sigma ({\mathcal L}) /4}$ which is present in the definition of $I_k(M_0)$, and then $I_k(M_0) = 1$. 
\end{proof} 

Let us consider the lens spaces $L_{p/r} $ where the  integers $p$ and $r$ are coprime and satisfy $0 < r < p$.  The fundamental group of $L_{p/r}$ is ${\mathbb Z}_p $ and one also has $H_1 (L_{p/r}) \simeq {\mathbb Z}_p$. When $p \not= p^\prime$, the lens spaces $L_{p/r}$ and $L_{p^\prime /r^\prime }$ are not homeomorphic. 
The manifolds $L_{p/r}$ and $L_{p/r^\prime }$  are homeomorphic iff $\pm r^\prime \equiv r^{\pm 1} $ (mod~$p$).   
The  manifold $L_p  $ admit a surgery presentation given by the unknot  with surgery coefficient equal to the integer $p$.
Special cases are $L_0  \simeq  S^2\times S^1$, $L_1 \simeq S^3$; equation  (\ref{46})  gives 
\be 
I_k (S^3) = 1 \quad , \quad I_k( S^2 \times S^1 ) = \sqrt {2k }  \; . 
\label{49}
\ee
By using the following reciprocity formula \cite{SIE}
\be 
\sum_{n=0}^{|c|-1} e^{-{i\pi \over c} ( a n^2 +bn) } = \sqrt{| {c / a}| } \, e^{- {i \pi \over 4 ac } ( |ac | - b^2) }\, 
\sum_{n=0}^{|a|-1} e^{{i\pi \over a} ( c n^2 +bn) } \; , 
\label{50}
\ee
which is valid for integers $a$, $b$ and $c$ such that $ac \not= 0$ and $ac + b = \, $even,   one gets (for integer  $  p >1$)
\be 
I_k(L_p) = \frac{1}{\sqrt {p}}  \sum_{n=0}^{p-1} e^{ 2\pi i k \, ( n^2 / p)} \; . 
\label{51}
\ee
Let us compare expression (\ref{51})  with the functional integral interpretation  \cite{L} of the Reshetikhin-Turaev invariant
\be 
I_k(L_p) = \frac{\int_{L_p} DA \, e^{2 \pi i k S }  }{\int_{S^3} DA \, e^{2 \pi i k S }} \; .   
\label{52}
\ee
In agreement with equation (\ref{8}),  the fields configuration space $H^1_D (L_p)$ has the structure of a bundle over $H^2(L_p) = {\mathbb Z}_p $ with fibre $\Omega^1 (L_p) \big / \Omega_{\mathbb Z}^1(L_p)$. Let the group $H^2(L_p)$ be generated by the element $h$, with $h^p =1$. 
The path-integral (\ref{52})  over $H^1_D (L_p)$ is formally given by a sum of $p$ terms; the $n$-th  term corresponds to the path-integral over 1-forms modulo forms of integer periods in the $n$-th sector of $H^1_D (L_p)$ which is associated to the  element $h^n$  of the second  cohomology group of $L_p$. The result (\ref{51})  suggests the possibility that the  path-integral in the $n$-th sector of $H^1_D(L_p)$ is saturated by a single  value $S \big |_{n}  $ of the Chern-Simons action, with $S \big |_{n} = n^2 /p $ modulo  integers. 

The manifold $\Sigma_g \times S^1$, where $\Sigma_g$ denotes a closed oriented surface of genus $g$, admits a surgery presentation that is described \cite{EGA} by a surgery link which contains $2g+1$ components (with vanishing surgery coefficients) and has vanishing linking matrix. The first homology group is $H_1(\Sigma_g \times S^1 ) = {\mathbb Z}^{2g+1} $ ; one finds 
\be
I_k (\Sigma_g \times S^1)= (2k)^{(2g+1)/2} \; . 
\label{53}
\ee

Since the abelian link invariants only  depend on the homology of the complement of the (simplicial satellites of) links in $S^3$, one could suspect  that  the invariant $I_k(M)$  only depends on the homology group $H_1 (M)$ of the closed oriented 3-manifold $M$. 
This guess is  supported by Property~4.3 and by the result (\ref{53}); moreover, it naturally fits the structure of the  configuration space on which the functional integral is based. 
However, a few counter-examples  demonstrate that this conjecture  is false.  In the non-abelian case, this guess is not correct; in facts, explicit counter-examples for the gauge groups $SU(2)$ and $SU(3)$ can be found in Ref.\cite{GP}.  
   
The lens space $L_{5/1}$ admits a surgery presentation in $S^3$ which is described by the unknot with surgery coefficient 5; whereas a surgery link corresponding to  $L_{5/2}$ is the Hopf link \cite{ROL} in $S^3$  with surgery coefficients 2 and 3. From equation(\ref{46}) we obtain 
\be
I_2(L_{5/1}) = -1 \qquad , \qquad I_2(L_{5/2}) = 1 \; . 
\label{54}
\ee

\noindent The manifold $L_{9/1}$ can be described by the unknot in $S^3$ with surgery coefficient 9 and $L_{9/2}$ corresponds to the Hopf link with surgery coefficients 5 and 2.   We get 
\be
I_3(L_{9/1}) = i \sqrt{3} \qquad , \qquad I_3(L_{9/2}) = -i \sqrt{3} \; . 
\label{55}
\ee

\subsection*{Homotopy type} The manifolds  $ L_{9/1}$ and  $L_{9/2} $   are of the same homotopy type \cite{ROL}. Therefore, equation (\ref{55}) also shows that the Reshetikhin-Turaev invariant (\ref{46}) is not a   function of the homotopy type of the manifold $M$ only. In facts,  Murakami, Ohtsuki and Okada have shown \cite{JP} that expression (\ref{46}) is invariant under orientation-preserving homotopies  \cite{CBT}. Since under a modification of the orientation of the manifold $M$ one gets  $I_k(M) \rightarrow {\overline {I_k (M)}}$, the result (\ref{55}) is in agreement with Murakami, Ohtsuki and Okada statement. 

Let us consider the lens spaces with the orientation induced by the surgery presentation;  $L_{p/r}$ and $L_{p/ r^\prime }$ are of the same homotopy type iff $\pm r r^\prime \equiv  m^2 $ (mod $p$) for some integer $m$. Hansen, Slingerland and Turner have shown \cite{HST} that, when $r r^\prime \equiv - m^2$  (mod $p$),   one finds  $I_k (L_{p/r}) = {\overline {I_k(L_{p / r^\prime})}}$; one example of this kind  is shown in equation (\ref{55}).  Whereas, when  the product $r r^\prime $ is equivalent to a quadratic residue,  $r r^\prime \equiv  m^2$  (mod $p$),   one has $I_k (L_{p/r}) = I_k(L_{p / r^\prime})$, for istance
\be 
I_3(L_{7/1}) = - i  \qquad , \qquad I_3(L_{7/2}) = -i  \; . 
\label{56}
\ee

\noindent The equivalence relation under orientation-preserving homotopy extends to the manifolds which are  connected sum  of equivalent spaces \cite{CBT}. However, in the presence of orientation-reversing homotopy, this   equivalence relation in general does not survive  the connected sum.   
For instance,  the connected sums $L_{9/1
} \# L_{7/1}$ and $L_{9/2} \# L_{7/2}$ have different Reshetikhin-Turaev invariants  which are not related by complex conjugation
\be 
I_3(L_{9/1} \# L_{7/1}) = \sqrt{3}  \qquad , \qquad I_3(L_{9/2} \# L_{7/2}) = - \sqrt{3}  \; . 
\label{57}
\ee

\vskip 1 truecm 

\noindent {\bf Acknowledgments.}  We wish to thank R.~Benedetti and A.~Lerario for useful discussions. 

\vskip 1 truecm

\bibliographystyle{amsalpha}

\end{document}